\documentclass[a4paper,UKenglish,cleveref, autoref, thm-restate]{lipics-v2021}

\hideLIPIcs  

\usepackage[square,numbers,comma]{natbib}

\DeclareMathOperator{\ldeg}{ldeg}
\DeclareMathOperator{\hdeg}{hdeg}
\DeclareMathOperator{\R}{\mathbb{R}}

\usepackage{float}

\let\oldexists\exists
\let\exists\relax
\DeclareMathOperator{\exists}{\oldexists}

\let\oldforall\forall
\let\forall\relax
\DeclareMathOperator{\forall}{\oldforall}

\newcommand{\restr}[2]{%
    \left.\kern-\nulldelimiterspace
    #1
    \vphantom{\big|}
    \right|_{#2}%
}

\usepackage{csquotes} 
\MakeOuterQuote{"}

\usepackage{mathtools}               
\usepackage{xfrac}                   
\usepackage{tikz-cd}                 

\usepackage{footmisc}
\usepackage[capitalise]{cleveref} 

\usepackage{standalone}

\usepackage{natbib}

\theoremstyle{plain}

\usepackage{thmtools, thm-restate}

\AddToHook{env/restLemma/begin}{\crefalias{theorem}{restLemma}}

\crefname{restLemma}{{Lemma}}{{Lemma}s}
\Crefname{restLemma}{Lemma}{Lemmas}

\crefname{theorem}{Theorem}{Theorems}
\Crefname{theorem}{Theorem}{Theorems}
\AddToHook{env/definition/begin}{\crefalias{theorem}{definition}}
\crefname{definition}{{Definition}}{{Definition}s}
\Crefname{definition}{Definition}{Definitions}

\AddToHook{env/notation/begin}{\crefalias{theorem}{notation}}
\crefname{notation}{{Notation}}{{Notation}s}
\Crefname{notation}{Notation}{Notations}

\AddToHook{env/lemma/begin}{\crefalias{theorem}{lemma}}
\crefname{lemma}{{Lemma}}{{Lemma}s}
\Crefname{lemma}{Lemma}{Lemmas}

\AddToHook{env/corollary/begin}{\crefalias{theorem}{corollary}}
\crefname{corollary}{{Corollary}}{{Corollarie}s}
\Crefname{corollary}{Corollary}{Corollaries}

\AddToHook{env/example/begin}{\crefalias{theorem}{example}}
\crefname{example}{{Example}}{{Example}s}
\Crefname{example}{Example}{Examples}

\AddToHook{env/fact/begin}{\crefalias{theorem}{fact}}
\crefname{fact}{{Fact}}{{Fact}s}
\Crefname{fact}{Fact}{Facts}

\AddToHook{env/remark/begin}{\crefalias{theorem}{remark}}
\crefname{remark}{{Remark}}{{Remark}s}
\Crefname{remark}{Remark}{Remarks}

\usepackage{algorithm}
\usepackage[noend]{algpseudocode}

\tikzset{
    D/.style={draw, circle, inner sep=0pt,
    minimum size=4pt, fill},
    S/.style={draw,circle,inner sep=0pt,
    minimum size=3pt,fill},
    B/.style={draw=none, circle, inner sep=0pt,
    minimum size=4pt, fill=blue},
    R/.style={draw=none, circle, inner sep=0pt,
    minimum size=4pt, fill=red},
    G/.style={draw=none, circle, inner sep=0pt,
    minimum size=4pt, fill=gray}
}

\usepackage{microtype} 

\title{Which Vertical Graphs are Non VPHT Reconstructible?}
\titlerunning{Which Vertical Graphs are Non VPHT Reconstructible?}

\author{Jette Gutzeit}
{Department of Mathematics, University of Zurich, Switzerland}
{jette.gutzeit@uzh.ch}
{}
{}

\author{Kalani Kistler}
{Department of Mathematics, ETH Zürich, Switzerland}
{kkistler@student.ethz.ch}
{}
{}

\author{Tim Ophelders}
{Department of Information and Computing Sciences, Utrecht University, Netherlands \and
Department of Mathematics and Computer Science, TU Eindhoven, Netherlands}
{t.a.e.ophelders@tue.nl}
{}
{}

\author{Anna Schenfisch}
{Department of Mathematics, KTH Royal Institute of Technology, Sweden}
{schenf@kth.se}
{}
{}
\authorrunning{J. Gutzeit, K. Kistler, T. Ophelders, A. Schenfisch}
\ArticleNo{100}

\newcommand{\commentHack}[1]{}

\Copyright{Jette Gutzeit, Kalani Kistler, Tim Ophelders, Anna Schenfisch} 

\ccsdesc[500]{Theory of computation~Computational geometry} 

\keywords{topological data analysis, persistent homology, graph reconstruction, verbose persistence diagrams} 

\category{} 

\relatedversion{} 

\EventEditors{John Q. Open and Joan R. Access}
\EventNoEds{2}
\EventLongTitle{42nd Conference on Very Important Topics (CVIT 2016)}
\EventShortTitle{CVIT 2016}
\EventAcronym{CVIT}
\EventYear{2016}
\EventDate{December 24--27, 2016}
\EventLocation{Little Whinging, United Kingdom}
\EventLogo{}
\SeriesVolume{42}
\ArticleNo{23}

\begin{document}

\nolinenumbers
\maketitle
\begin{abstract}
    The verbose persistent homology transform (VPHT)
    is a topological summary of shapes in Euclidean space. Assuming general position, the VPHT is injective, meaning shapes can be reconstructed using only the VPHT.
    In this work, we investigate cases in which the VPHT is not
    injective, focusing on a simple setting of degeneracy;
    graphs whose vertices are all collinear. We identify both necessary properties and sufficient
    properties for non-reconstructibility of such graphs, bringing us closer to
    a complete classification.
\end{abstract}

\section{Introduction}\label{prelim}
Persistence diagrams are a topological tool for data analysis. The persistent homology transform (PHT) is the collection of these diagrams corresponding to height filtrations of some shape in every possible direction. First introduced in \cite{turner2014persistent}, the PHT was shown to be injective
for simplicial complexes embedded in general position in dimensions $d=2,3$, meaning that we can use it to reconstruct them up to subdivision.
Injectivity was later shown for general
$d$ in various settings~\cite{ghrist2018persistent, ji2024injectivity,belton_reconstructing_2020, curry2022many,
fasy_faithful_2024, betthauser2018topological, micka2020searching,onus2024shoving, turner2024extended}.
In this work, we take a rather opposite approach, and instead investigate which
simplicial complexes are \emph{not} reconstructible by homology transforms. 
In the following, we consider the more discriminative \emph{verbose} variant of persistence
diagrams, which give rise to the \emph{verbose} PHT (VPHT), described in \cref{sec:preliminaries}.
While~\cite{ji2024injectivity} identifies families of functions not reconstructible by the related Euler characteristic transform, we are not aware of other work specifically focused on non-reconstructibility of simplicial complexes using the (V)PHT.
It is easy to find simplicial complexes that are non
PHT-reconstructible, but it is less straightforward to do so for the more
discriminative VPHT. Furthermore, as we already know the VPHT is injective
for simplicial complexes whose vertices are in general
position~\cite{fasy_faithful_2024}, we turn to a simple degenerate setting as
our starting place, namely, \emph{vertical graphs}, whose vertices are all
collinear.
Our main results identify families and characterizations of non
VPHT-reconstructible vertical graphs. We additionally observe that our results
inform the related question of choosing directions for more general
application settings; directions that see vertices and edges of a general graph in the
same order as a non-reconstructible vertical graph will not be able to
effectively determine the edge set on their own.


\section{Preliminaries}
\label{sec:preliminaries}
We first discuss the VPHT and related concepts. We consider graphs $G
= (V, E)$ in~$\R^2$, but note that all definitions generalize to simplicial
complexes in $\R^d$ (see, e.g.,~\cite{fasy_faithful_2024}). Given a unit vector
$s$, the height of a vertex $v$ with respect to~$s$ is the dot product $h_s(v) =
s \cdot v$. We extend this to edges as $h_s(v,w) = \max\{h_s(v), h_s(w)\}$.
Then, the \emph{lower-star filtration of $G$ with respect to $s$} is the nested
inclusion of subgraphs of~$G$ with increasing maximum height.

Associated to every lower-star filtration are compatible total orderings of $V
\cup E$, where, for~$\sigma_i, \sigma_j \in V \cup E$, the relation $\sigma_i
\prec \sigma_j$ implies $h_s(\sigma_i) \leq h_s(\sigma_j)$, i.e., total
orders that respect the height relation.
From such orders, we build \emph{compatible index filtrations},
or nested inclusions of subgraphs of $G$ with increasing maximum value in the total order. See~\cref{fig:VPHTfigure}.

Finally, we introduce \emph{persistence diagrams}, multisets of points in $(\R
\cup\{\infty\})^2$, where  a point $(a,b)$ represents the lifespan of a connected component or cycle. Each feature is \emph{born} at its first appearance, $a$ in the filtration. A connected component \emph{dies} at height $b$ when it merges with an older component. If this never occurs, $b=\infty$. Cycles always live to infinity. Persistence diagrams that arise from a lower-star filtration are examples of (standard) \emph{concise diagrams}, and persistence diagrams that arise from compatible index filtrations are called \emph{verbose diagrams}. See~\cref{fig:VPHTfigure}, right.

 Verbose diagrams differ from their concise counterparts only in that they have additional points on the diagonal. 
 Verbose persistence is well-defined, see, e.g., \cite[Appendix B]{fasy_faithful_2024}. 
While a full discussion is out of the scope of this writeup, we note that
verbose diagrams are not new;~\cite{usher2016persistent} establishes verbose
diagrams as arising from filtered chain complexes. This perspective is also
taken by, e.g., \cite{chacholski2023decomposing, memoli2022ephemeral}. 

We utilize the following important observation about verbose persistence diagrams.
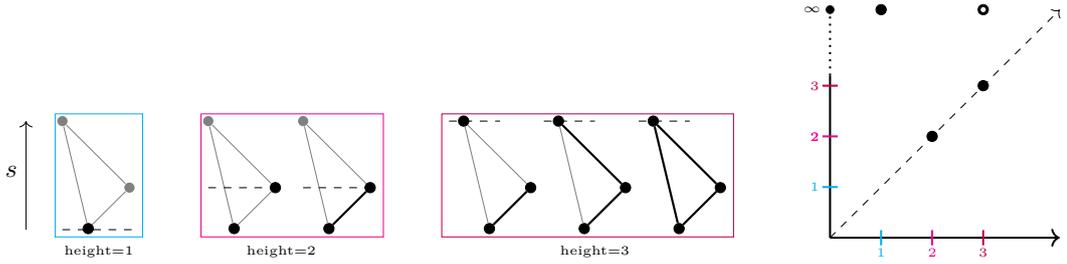
\begin{figure}
    \centering
    \begin{tikzpicture}
    \draw [->](-0.5,-1.5)--(-0.5,0);
    \node at (-0.7,-0.7){$s$};

    \node (a0) [G]  at (0,0){};
    \node (a1)[G, below right of=a0, node distance=13mm] {};
    \node (a2) [D, below left of=a1, node distance=8mm] {};
    \draw [gray] (a0)--(a1)--(a2)--(a0);
    \draw[dashed] (0,-1.5)--(1,-1.5);
    \draw[cyan] (-0.1,0.1)--(1.1,0.1)--(1.1,-1.6)--(-0.1,-1.6)--(-0.1,0.1);
    \node  at (0.5,-1.8){\tiny height=1};

    \node (b0) [G]  at (2,0){};
    \node (b1)[D, below right of=b0, node distance=13mm] {};
    \node (b2) [D, below left of=b1, node distance=8mm] {};
    \draw [gray] (b0)--(b1)--(b2)--(b0);
    \draw[dashed] (2,-0.92)--(3,-0.92);
    \node (c0) [G]  at (3.3,0){};
    \node (c1)[D, below right of=c0, node distance=13mm] {};
    \node (c2) [D, below left of=c1, node distance=8mm] {};
    \draw [gray] (c2)--(c0)--(c1);
    \draw [line width = .3mm] (c1)--(c2);
    \draw[dashed] (3.3,-0.92)--(4.3,-0.92);
    \draw[magenta] (1.9,0.1)--(4.4,0.1)--(4.4,-1.6)--(1.9,-1.6)--(1.9,0.1);
    \node at (3,-1.8){\tiny height=2};

    \node (d0) [D]  at (5.5,0){};
    \node (d1)[D, below right of=d0, node distance=13mm] {};
    \node (d2) [D, below left of=d1, node distance=8mm] {};
    \draw [gray] (d0)--(d1)--(d2)--(d0);
    \draw[line width = .3mm] (d1)--(d2);
    \draw[dashed] (5.3,0)--(6,0);
    \node (e0) [D]  at (6.8,0){};
    \node (e1)[D, below right of=e0, node distance=13mm] {};
    \node (e2) [D, below left of=e1, node distance=8mm] {};
    \draw [gray] (e0)--(e1)--(e2)--(e0);
    \draw[line width = .3mm]  (e0)--(e1)--(e2);
    \draw[dashed] (6.6,0)--(7.3,0);
    \node (f0) [D]  at (8.1,0){};
    \node (f1)[D, below right of=f0, node distance=13mm] {};
    \node (f2) [D, below left of=f1, node distance=8mm] {};
    \draw [line width = .3mm] (f0)--(f1)--(f2)--(f0);
    \draw[dashed] (7.9,0)--(8.6,0);
    \draw[purple] (5.2,0.1)--(9.2,0.1)--(9.2,-1.6)--(5.2,-1.6)--(5.2,0.1);
    \node at (7.3,-1.8){\tiny height=3};
\end{tikzpicture}
\hspace{0.6cm}
\begin{tikzpicture}[scale=0.7]
    \draw[thick](0,0)--(0,3.2);
    \draw[thick,dotted](0,3.2)--(0,4.5);
    \node[S] at (0,4.5){};
    \draw[->,thick](0,0)--(4.5,0);
    \node [cyan] at (1,-0.3){\tiny 1};
    \draw[cyan,thick] (1,-0.15)--(1,0.15);
    \node [magenta] at (2,-0.3){\tiny 2};
    \draw[magenta,thick] (2,-0.15)--(2,0.15);
    \node [purple] at (3,-0.3){\tiny 3};
    \draw[purple,thick] (3,-0.15)--(3,0.15);
    \node [cyan] at (-0.3,1){\tiny 1};
    \draw[cyan,thick] (-0.15,1)--(0.15,1);
    \node [magenta] at (-0.3,2){\tiny 2};
    \node [magenta] at (-0.3,2){\tiny 2};
    \draw[magenta,thick] (-0.15,2)--(0.15,2);
    \node [purple] at (-0.3,3){\tiny 3};
    \draw[purple,thick] (-0.15,3)--(0.15,3);
    \node at (-0.35,4.5){\tiny $\infty$};
    \draw[dashed,->](0,0)--(4.5,4.5);
    \node[D] at (1,4.5){};
    \node[D] at (2,2){};
    \node[D] at (3,3){};
    \node[D] at (3,4.5) {};
    \node[circle, draw=white,draw=none, circle, inner sep=0pt,
    minimum size=2pt, fill=white] at (3,4.5){};
\end{tikzpicture}
    \caption{The lower-star filtration of this graph with respect to the vertical up direction $s$ has a step for each vertex height, adding; (1) the bottom vertex (which creates a
    connected component), (2) the middle vertex and first edge (no effect on
    homology), and (3) the top vertex and remaining edges (creating a cycle). A corresponding index filtration is shown in the boxes, adding vertices and edges individually. This causes instantaneous births/deaths of components,
    which appear as on-diagonal points in the verbose diagram (right). The circle denotes the cycle born at height (3) and the black dots denote connected components.}
    \label{fig:VPHTfigure}
\end{figure}
\begin{observation}
\label{obs:points}
    There is a one-to-one correspondence between points in the verbose diagram
    recording a connected component and vertices of the graph, with birth
    coordinates equal to the heights of vertices. 
    That is, vertices are \emph{creators}.
    Additionally, there is a one-to-one correspondence between points in the
    verbose diagram recording cycles and points recording connected components with finite lifespans, with cycle births and component deaths equal to heights of edges. 
    That is, edges are either creators of cycles or \emph{destructors} of connected components.
\end{observation}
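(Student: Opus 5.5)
We prove \cref{obs:points} by running the standard persistence (union--find) argument along a compatible index filtration, i.e.\ a total order $\sigma_1 \prec \sigma_2 \prec \dots$ of $V \cup E$ in which every edge comes after both of its endpoints. Since the graph has no $2$-simplices, every simplex added either increases $\beta_0$, increases $\beta_1$, or decreases $\beta_0$. Each such change is a birth or a death in the verbose diagram, and each simplex causes exactly one of them. Verbose persistence is well defined \cite[Appendix B]{fasy_faithful_2024}, so the diagram does not depend on which compatible order we pick, and it suffices to argue for a single order.

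\textbf{Vertices.} Adding a vertex $v$ to a graph creates a new isolated component, so $\beta_0$ goes up by one and nothing else changes. Thus each vertex gives a component point with birth coordinate $h_s(v)$, and distinct vertices give distinct points. Conversely, every component point has a birth, and only a vertex can raise $\beta_0$. This gives the bijection between vertices and component points, with births equal to vertex heights.

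\textbf{Edges.} When an edge $e=(v,w)$ is added, its endpoints are already present. There are two cases.
\begin{itemize}
\item If $v$ and $w$ lie in the same component, then $\beta_0$ is unchanged and $\beta_1$ goes up by one, since there is a new cycle through $e$. So $e$ creates a cycle point born at $h_s(e)$. No $2$-simplices exist, so this cycle never dies.
\item Otherwise $e$ merges two components. By the elder rule the younger one dies at height $h_s(e)$, so $e$ is a destructor.
\end{itemize}
Each component dies at most once, and only an edge can kill one. Hence the finite-lifespan component points are in bijection with the destructor edges, and their death coordinates are the heights of those edges.

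\textbf{Counting.} The remaining claim is that cycle points are in one-to-one correspondence with finite component points. We would prove it by counting, since the individual points are not naturally paired. Write $c$ for the number of connected components of $G$. The creator edges number $\beta_1(G) = |E| - |V| + c$. The destructor edges number $|E| - \beta_1(G) = |V| - c$, which is also the number of finite component points, since exactly $c$ components survive to $\infty$. Matching these two counts is the delicate step. A natural matching does not seem to exist, so the bijection should be read as a statement of equal cardinality: $|E| = \#\text{destructors} + \#\text{creators}$, with each edge supplying either a finite component death or a cycle birth, at the height of that edge. That edge-wise pairing of heights is the substantive content, and I expect that pinning down this reading is the main obstacle rather than any computation.
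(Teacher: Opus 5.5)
Your Vertices and Edges paragraphs are correct, and they are the whole argument. The paper states this observation without proof; the reasoning behind it is exactly this standard creator/destructor bookkeeping.

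The problem is the Counting paragraph. There you read the second sentence literally, as a bijection between cycle points and finite-lifespan component points, and propose to prove it as an equality of cardinalities. That equality is false, so the step cannot be carried out. The two counts you computed, $\beta_1(G)=|E|-|V|+c$ and $|V|-c$, differ in general. The triangle in \cref{fig:VPHTfigure} already shows this: its diagram has two finite component points, $(2,2)$ and $(3,3)$, but only one cycle point, $(3,\infty)$. A single edge is even simpler: one finite component point and no cycle point. So this is not a delicate step but an impossible one. Writing that the bijection \emph{should be read as a statement of equal cardinality} puts a false claim into your proof.

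The fix is to read the sentence as its \emph{That is} clause indicates. The paper uses it the same way in the proof of \cref{thm:VPHTimplCollision}, where edges with upper endpoint $v_i$ are matched with zero-dimensional deaths and one-dimensional births at height $i$. On this reading, the edges are in bijection with the disjoint union of the cycle points and the finite-lifespan component points. Each edge $e$ contributes either a cycle born at $h_s(e)$ or a component death at $h_s(e)$.

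Your Edges paragraph already proves this once you state the converse explicitly. Adding a vertex never lowers $\beta_0$ or raises $\beta_1$, so every finite death and every cycle birth comes from a unique edge. Replace the Counting paragraph with that remark and the proof is complete. No matching between cycle points and component points is needed, and none exists.
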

\newpage 
We now define the verbose persistent homology transform, specific to graphs:

\begin{definition}[Verbose Persistent Homology Transform]
Given a graph $G$ with vertices in~$\mathbb{R}^2$, the verbose
persistent homology transform (VPHT) of $G$, written $\mathit{VPHT}(G)$, is the
set of verbose diagrams corresponding to filtrations in all directions, parameterized by direction.
\end{definition}

In subsequent sections, we  discuss graphs with multi-edges, which are
themselves not simplicial complexes. We also discuss directed graphs. However,
we use multi- and directed edges only as a tool to understand pairs of simple
undirected graphs through their disjoint unions, and we only consider the VPHT
of undirected simple graphs.

Finally, we come to reconstructibility.
\begin{definition}[VPHT-Reconstructible]
   We say a graph $G_1$ with an embedded vertex set is \emph{VPHT reconstructible} if $\mathit{VPHT}(G_1) =
   \mathit{VPHT}(G_2)$ implies $G_1=G_2$. That is, the VPHT of $G_1$ is unique, and could be used to reconstruct $G_1$. Otherwise, $G_1$ is \emph{non
   VPHT-reconstructible}.
\end{definition}

Graphs in general position are always VPHT-reconstructible~\cite[Secs 4 and
5]{belton_reconstructing_2020}. We therefore focus on the simplest graphs breaking general position, graphs whose vertices are collinear and, without loss of generality, aligned with the $y$-axis.

\begin{figure}
    \centering
    \begin{tikzpicture}
    \draw[thick, fill =cyan!40] (0,0) ellipse (4cm and 2.5cm);
    
    \draw[thick,fill=blue!40] (-1,0) ellipse (3cm and 1.9cm);

    \draw[thick,fill=purple!40] (-1.8,0) ellipse (2.2cm and 1.3cm);
    
    \draw[thick,fill=red!50] (-3,0) ellipse (1cm and 0.7cm);
    
    \node at (3,0.2) {\small vertical};
    \node at (3,-0.2) {\small{graphs}};
    \node at (1.1,0.2) {\small colliding};
    \node at (1.1,-0.2) {\small pairs};
    \node at (-0.9,0.35) {\small vertical};
    \node at (-0.9,0.0) {\small non VPHT-}; \node at (-0.9,-0.3) {\small reconstructible};
    \node at (-3,0) {\small \cref{prop:special-colliding-pairs}};
\end{tikzpicture}
    \caption{The above Venn diagram shows the classification of different graphs that will be discussed in the following pages. All inclusions are strict, where \cref{lem:anothercycle} distinguishes the pink class from the red and \cref{fig:multiple edges} distinguishes the purple class from the pink.}
    \label{fig:venndiagramm}
\end{figure}
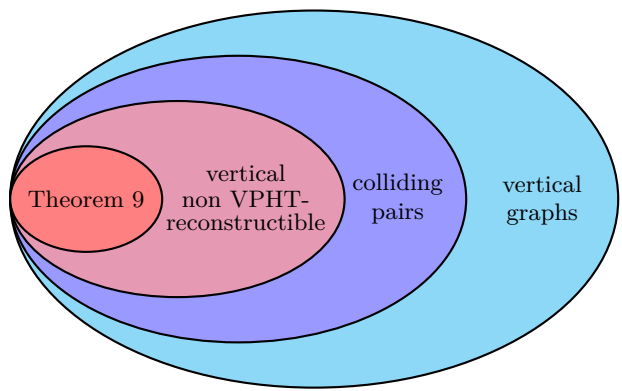

\begin{definition}[Vertical Graph]
    A \emph{vertical graph} $G$ is a graph whose vertices $(v_1,\ldots,v_n)$ are distinct points on a vertical line, such that $i<j $ implies that $v_j$ is above $v_i$, which we may also write as $v_i < v_j$. We write the height of $v_i$ in the up direction as~$i$.
\end{definition}

Note that in this setting, every direction encounters the vertices in one of
three orders; bottom-up, top-down, or simultaneously. The diagrams corresponding
to directions that see the vertices in one of these orders are related.
Directions that see vertices in the same total order produce nearly identical
persistence diagrams, differing only by the exact heights of events, but
\emph{not} in the order or type of events. Diagrams for the two directions that
see vertices simultaneously shift the order of events to the extreme, with every
event born at the same height and dying either instantaneously or
at infinity. These observations are more formally stated below, and allow us to
greatly simplify our proofs.

\begin{observation}
    Let $G$ be a vertical graph. 
 Then we can obtain $\mathit{VPHT}(G)$ from the verbose diagrams corresponding to only two directions, e.g.,
up and down along the vertical axis, as they can be used to obtain the verbose diagrams for any other direction. 
\end{observation}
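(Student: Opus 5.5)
The plan is to show that, for each direction $s$, the verbose diagram is determined by combinatorial data that depend only on which of the three vertex orders $s$ induces, together with the heights $h_s(v_i)$. Fix a vertical graph $G$ with vertices $v_1 < \dots < v_n$ on the $y$-axis. Write $s = (s_x, s_y)$. Since every vertex has $x$-coordinate $0$, we have $h_s(v_i) = s_y \cdot y(v_i)$. There are three cases: $s_y > 0$ (vertices seen bottom-up), $s_y < 0$ (top-down), and $s_y = 0$ (all vertices at height $0$).

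\textbf{Case $s_y > 0$.} The map $t \mapsto s_y t$ is strictly increasing, so it sends heights in the up direction to heights in direction $s$. This holds for vertices, and also for edges, because the maximum commutes with increasing maps. So the compatible total orders of $V \cup E$ are the same for $s$ as for the up direction. The index filtrations therefore coincide, as do the creator/destructor pairings of \cref{obs:points}. The verbose diagram for $s$ is then obtained from the up diagram by applying $t \mapsto s_y t$ to each finite coordinate and fixing $\infty$. Here I must check that the verbose diagram does not depend on the chosen compatible order; I will invoke its well-definedness from \cite[Appendix B]{fasy_faithful_2024} rather than reprove it. I also need the coordinates of the up diagram to be actual $y$-coordinates, not the indices $i$ of the vertical-graph convention. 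If only indices are recorded, the embedded positions of the vertices are known as part of the graph and can be used to recover the $y$-coordinates. The case $s_y < 0$ is identical, with the down direction and the rescaling $t \mapsto |s_y| t$.

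\textbf{Case $s_y = 0$.} This is the step that needs care. Every vertex and edge has height $0$, so any order in which each edge follows its endpoints is compatible. I need to show that the verbose diagram is determined by data read off from the up diagram. Each point's coordinates are births and deaths at height $0$, or $\infty$, so the diagram is determined by multiplicities alone.
\begin{itemize}
\item Component points equal $|V| = n$, by \cref{obs:points}, and $n$ is the number of component points in the up diagram.
\item Of these, the number with infinite death is the number $\beta_0$ of connected components of $G$, which equals the number of essential component points in the up diagram.
\item The cycle points number $|E| - n + \beta_0$, all at $(0,\infty)$. This count equals the number of cycle points in the up diagram.
\end{itemize}
Hence the horizontal diagrams are $n - \beta_0$ copies of $(0,0)$, $\beta_0$ component points at $(0,\infty)$, and $|E| - n + \beta_0$ cycle points at $(0,\infty)$, all read off from the up diagram.

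Combining the three cases gives every diagram of $\mathit{VPHT}(G)$ from the up and down diagrams. The down diagram is not needed for $s_y = 0$, but it is needed for directions with $s_y < 0$.
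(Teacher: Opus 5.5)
Your proposal is correct and takes the same route as the paper's justification in the paragraph preceding the observation. That argument splits directions by whether they see the vertices bottom-up, top-down, or simultaneously; same-order directions differ from up or down only by a rescaling of heights, and in the horizontal directions every event is born at one height and dies instantly or at infinity. Your multiplicity count of $n-\beta_0$ diagonal points, $\beta_0$ essential components, and $|E|-n+\beta_0$ cycles, all read off the up diagram, makes the paper's informal last claim precise.
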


To understand relationships between pairs of graphs $G_1$ and $G_2$, we often use
their \emph{disjoint union}, $G_1 \sqcup G_2$, by which we mean the ordinary
union of their vertices and disjoint union of their edges
(possibly leading to multi-edges). We utilize this idea in the
following definition.

\begin{definition}[Simple Colliding Pair and Alternating Cycle]
\label{def:collidingpair}
Vertical graphs $G_1$ and $G_2$ form a \emph{simple colliding pair} if orienting all edges of $G_1$ up and all edges of $G_2$ down creates a cycle in $G = G_1 \sqcup G_2$ that alternates between up and down edges, i.e., creates an \emph{alternating cycle}.
    See \cref{fig:firstEx}. 
\end{definition}

    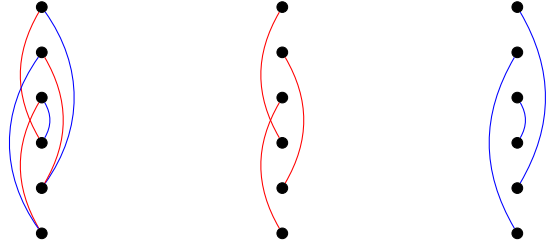
\begin{figure}[h]
        \centering
        \begin{tikzpicture}[node distance= 6mm]
            \node [D](a){};
            \node [above of=a,D](b){};
            \node [above of=b,D](c){};
            \node [above of=c,D](d){};
            \node [above of=d,D](e){};
            \node [above of=e,D](f){};
            \draw[bend left=30, color=red](a)to(d);
            \draw[bend left=30, color=blue](d)to(c);
            \draw[bend left=30, color=red](c)to(f);
            \draw[bend left=35, color=blue](f)to(b);
            \draw[bend right=30, color=red](b)to(e);
            \draw[bend right=35, color=blue](e)to(a);
        \end{tikzpicture}
        \hspace{2cm}
        \begin{tikzpicture}[node distance= 6mm]
            \node [D](a){};
            \node [above of=a,D](b){};
            \node [above of=b,D](c){};
            \node [above of=c,D](d){};
            \node [above of=d,D](e){};
            \node [above of=e,D](f){};
            \draw[bend left=30, color=red](a)to(d);
            \draw[bend left=30, color=red](c)to(f);
            \draw[bend right=30, color=red](b)to(e);
        \end{tikzpicture}
        \hspace{2cm}
        \begin{tikzpicture}[node distance= 6mm]
            \node [D](a){};
            \node [above of=a,D](b){};
            \node [above of=b,D](c){};
            \node [above of=c,D](d){};
            \node [above of=d,D](e){};
            \node [above of=e,D](f){};
            \draw[bend left=30, color=blue](d)to(c);
            \draw[bend left=30, color=blue](f)to(b);
            \draw[bend right=30, color=blue](e)to(a);
        \end{tikzpicture}
        \caption{The disjoint union (left) of a simple colliding pair (middle and right).
        We observe the alternating cycle if we consider all red edges to be oriented up (without loss of generality) and all blue edges to be oriented down.}
        \label{fig:firstEx}
    \end{figure}
    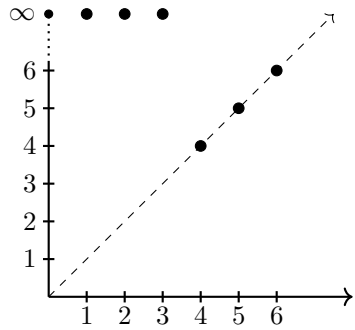
\begin{figure}[h]
        \centering
        \begin{tikzpicture}[scale=0.5]
            \draw[thick](0,0)--(0,6.2);
            \draw[thick,dotted](0,6.2)--(0,7.5);
            \node[S] at (0,7.5){};
            \draw[thick,->](0,0)--(8,0);
            \draw[->, dashed](0,0)--(7.5,7.5);
            \node at (-0.5,1){$1$};
            \node at (-0.5,2){$2$};
            \node at (-0.5,3){$3$};
            \node at (-0.5,4){$4$};
            \node at (-0.5,5){$5$};
            \node at (-0.5,6){$6$};
            \node at (1,-0.5){$1$};
            \node at (2,-0.5){$2$};
            \node at (3,-0.5){$3$};
            \node at (4,-0.5){$4$};
            \node at (5,-0.5){$5$};
            \node at (6,-0.5){$6$};
            \draw[thick] (-0.15,1)--(0.15,1);
            \draw[thick] (-0.15,2)--(0.15,2);
            \draw[thick] (-0.15,3)--(0.15,3);
            \draw[thick] (-0.15,4)--(0.15,4);
            \draw[thick] (-0.15,5)--(0.15,5);
            \draw[thick] (-0.15,6)--(0.15,6);
            \draw[thick] (1,-0.15)--(1,0.15);
            \draw[thick] (2,-0.15)--(2,0.15);
            \draw[thick] (3,-0.15)--(3,0.15);
            \draw[thick] (4,-0.15)--(4,0.15);
            \draw[thick] (5,-0.15)--(5,0.15);
            \draw[thick] (6,-0.15)--(6,0.15);
            \node at (-0.7,7.5){$\infty$};
            \node [D] at (1,7.5){};
            \node [D] at (2,7.5){};
            \node [D] at (3,7.5){};
            \node[D] at (4,4){};
            \node[D] at (5,5){};
            \node[D] at (6,6){};
        \end{tikzpicture}
        \caption{Verbose diagram for the red and blue graphs of \cref{fig:firstEx} when filtering in the up direction.}
        \label{fig:firstVPHT}
    \end{figure}
 \begin{definition}
    A graph of \emph{type $\mathcal{G}$} is a vertical graph, where the edges can be partitioned into alternating cycles $G^1,\ldots,G^k$.
    Furthermore, each $G^i$ can each be partitioned into a simple colliding pair $(G_1^{i},G_2^{i})$. The two subgraphs on the edges $\bigsqcup_{i\in [k]}G_1^{i}$, resp. $\bigsqcup_{i\in [k]}G_2^{i},$ are called graphs of \emph{type $\mathcal{G}_1$}, resp. $\mathcal{G}_2$. The pair $(\mathcal{G}_1,\mathcal{G}_2)$ is a \emph{colliding pair}.
\end{definition}
    Note that the choice of cycle partitioning $\mathcal{G}$ into some $\mathcal{G}_1 $ and $\mathcal{G}_2$ is not necessarily unique.
\begin{definition}
    For a vertical graph $G=(V,E)$, a fixed up direction, and vertex $v \in V$, we define the number of \emph{down-edges of $v$} as $\ldeg_G(v):=|\{w\in V\mid (v,w)\in E, w<v\}|$.
    Similarly, the number of \emph{up-edges of $v$} is $\hdeg_G(v):=|\{w\in V \mid (v,w)\in E, v<w\}|$.
\end{definition}

\section{Theoretical Results for Vertical Graphs}
\label{sec:theory}
Next, we classify families of vertical graphs that are non VPHT-reconstructible, summarized in \cref{fig:venndiagramm}. 
    We first observe that some colliding pairs are VPHT-reconstructible, e.g.,
    the graphs in \cref{fig:multiple edges}.   In addition to having distinct
    verbose diagrams in the up direction (\cref{fig:VPHT multiple edges}), we
    computationally checked that no other
    graph has the same VPHT. The following gives a family of
    colliding pairs that are \emph{not} reconstructible.

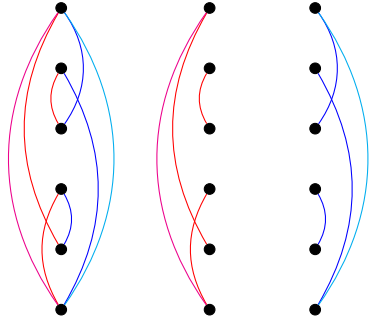
\begin{figure}[h]
    \centering
    \begin{tikzpicture}[node distance= 8mm]
            \node [D](a){};
            \node [above of=a,D](b){};
            \node [above of=b,D](c){};
            \node [above of=c,D](d){};
            \node [above of=d,D](e){};
            \node [above of=e,D](f){};
            \draw[bend left=30, color=red](a)to(c);
            \draw[bend left=30, color=blue](c)to(b);
            \draw[bend left=30, color=red](b)to(f);
            \draw[bend left=35, color=blue](f)to(d);
            \draw[bend left=30, color=red](d)to(e);
            \draw[bend left=30, color=blue](e)to(a);
            \draw[bend left=35, color=magenta] (a)to(f);
            \draw[bend left=35, color=cyan] (f)to(a);
        \end{tikzpicture}
        \begin{tikzpicture}[node distance= 8mm]
            \node [D](a){};
            \node [above of=a,D](b){};
            \node [above of=b,D](c){};
            \node [above of=c,D](d){};
            \node [above of=d,D](e){};
            \node [above of=e,D](f){};
            \draw[bend left=30, color=red](a)to(c);
            \draw[bend left=30, color=red](b)to(f);
            \draw[bend left=30, color=red](d)to(e);
            \draw[bend left=35, color=magenta] (a)to(f);
        \end{tikzpicture}
        \hspace{1cm}
        \begin{tikzpicture}[node distance= 8mm]
            \node [D](a){};
            \node [above of=a,D](b){};
            \node [above of=b,D](c){};
            \node [above of=c,D](d){};
            \node [above of=d,D](e){};
            \node [above of=e,D](f){};
            \draw[bend left=30, color=blue](c)to(b);
            \draw[bend left=35, color=blue](f)to(d);
            \draw[bend left=30, color=blue](e)to(a);
            \draw[bend left=35, color=cyan] (f)to(a);
        \end{tikzpicture}
    \caption{On the left $\mathcal{G}$, made up of two vertical graphs with alternating cycles; one includes all red and blue edges, the other the cyan and magenta edge. On the right the corresponding $\mathcal{G}_1$ and $\mathcal{G}_2$.}
    \label{fig:multiple edges}
\end{figure}

\begin{figure}[h]
        \centering
        \begin{minipage}{0.45\textwidth}
        \centering
        \begin{tikzpicture}[scale=0.5]
            \draw[thick](0,0)--(0,6.2);
            \draw[thick,dotted](0,6.2)--(0,7.5);
            \node[S] at (0,7.5){};
            \draw[thick,->](0,0)--(8,0);
            \draw[->, dashed](0,0)--(7.5,7.5);
            \node at (-0.5,1){$1$};
            \node at (-0.5,2){$2$};
            \node at (-0.5,3){$3$};
            \node at (-0.5,4){$4$};
            \node at (-0.5,5){$5$};
            \node at (-0.5,6){$6$};
            \node at (1,-0.5){$1$};
            \node at (2,-0.5){$2$};
            \node at (3,-0.5){$3$};
            \node at (4,-0.5){$4$};
            \node at (5,-0.5){$5$};
            \node at (6,-0.5){$6$};
            \draw[thick] (-0.15,1)--(0.15,1);
            \draw[thick] (-0.15,2)--(0.15,2);
            \draw[thick] (-0.15,3)--(0.15,3);
            \draw[thick] (-0.15,4)--(0.15,4);
            \draw[thick] (-0.15,5)--(0.15,5);
            \draw[thick] (-0.15,6)--(0.15,6);
            \draw[thick] (1,-0.15)--(1,0.15);
            \draw[thick] (2,-0.15)--(2,0.15);
            \draw[thick] (3,-0.15)--(3,0.15);
            \draw[thick] (4,-0.15)--(4,0.15);
            \draw[thick] (5,-0.15)--(5,0.15);
            \draw[thick] (6,-0.15)--(6,0.15);
            \node at (-0.7,7.5){$\infty$};
            \node [D] at (1,7.5){};
            \node [R] at (2,6){};
            \node [D] at (3,3){};
            \node[R] at (4,7.5){};
            \node[D] at (5,5){};
            \node[D] at (6,6){};
        \end{tikzpicture}
        \end{minipage}
        \begin{minipage}{0.45\textwidth}
        \centering
        \begin{tikzpicture}[scale=0.5]
            \draw[thick](0,0)--(0,6.2);
            \draw[thick,dotted](0,6.2)--(0,7.5);
            \node[S] at (0,7.5){};
            \draw[thick,->](0,0)--(8,0);
            \draw[->, dashed](0,0)--(7.5,7.5);
            \node at (-0.5,1){$1$};
            \node at (-0.5,2){$2$};
            \node at (-0.5,3){$3$};
            \node at (-0.5,4){$4$};
            \node at (-0.5,5){$5$};
            \node at (-0.5,6){$6$};
            \node at (1,-0.5){$1$};
            \node at (2,-0.5){$2$};
            \node at (3,-0.5){$3$};
            \node at (4,-0.5){$4$};
            \node at (5,-0.5){$5$};
            \node at (6,-0.5){$6$};
            \draw[thick] (-0.15,1)--(0.15,1);
            \draw[thick] (-0.15,2)--(0.15,2);
            \draw[thick] (-0.15,3)--(0.15,3);
            \draw[thick] (-0.15,4)--(0.15,4);
            \draw[thick] (-0.15,5)--(0.15,5);
            \draw[thick] (-0.15,6)--(0.15,6);
            \draw[thick] (1,-0.15)--(1,0.15);
            \draw[thick] (2,-0.15)--(2,0.15);
            \draw[thick] (3,-0.15)--(3,0.15);
            \draw[thick] (4,-0.15)--(4,0.15);
            \draw[thick] (5,-0.15)--(5,0.15);
            \draw[thick] (6,-0.15)--(6,0.15);
            \node at (-0.7,7.5){$\infty$};
            \node [D] at (1,7.5){};
            \node [B] at (2,7.5){};
            \node [D] at (3,3){};
            \node [B] at (4,6){};
            \node[D] at (5,5){};
            \node[D] at (6,6){};
        \end{tikzpicture}
        \end{minipage}
        \caption{Verbose persistence diagrams corresponding to $\mathcal{G}_1$ and $\mathcal{G}_2$ of \cref{fig:multiple edges} and the up direction; black dots appear in both diagrams, red only in $\mathcal{G}_1$, blue ones only in $\mathcal{G}_2$.}
        \label{fig:VPHT multiple edges}
    \end{figure}
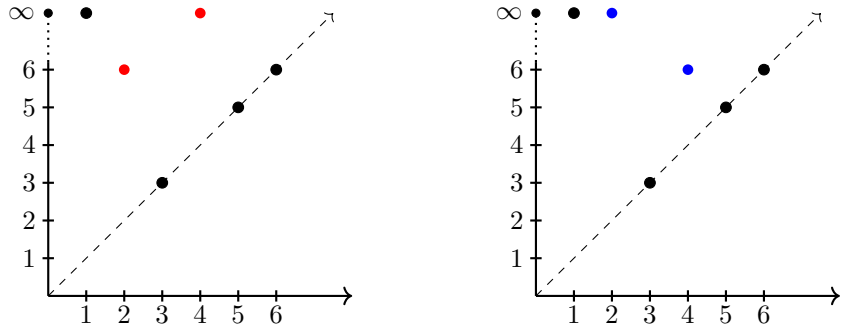
\begin{restatable}[Special Colliding Pairs are Non Reconstructible]{theorem}{SpecialCollidingPairsAreNon} 
    \label{prop:special-colliding-pairs}
    Consider a type $\mathcal{G}$ graph, such that each vertex has no more than two lower neighbors and no more than two upper neighbors. Then, for any choice of alternating cycles $C^1,\ldots,C^k$ partitioning $G$, the resulting graphs $\mathcal{G}_1$ and $\mathcal{G}_2$ are non VPHT-reconstructible.
\end{restatable}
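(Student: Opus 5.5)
The plan is to show directly that $\mathcal{G}_1$ and $\mathcal{G}_2$ have the same VPHT but are different graphs. By the observation that the VPHT of a vertical graph is determined by the verbose diagrams for the up and down directions, it suffices to compare those two diagrams. The down direction is the up direction for the reflected graph, and reflecting swaps $\ldeg$ and $\hdeg$. The hypotheses (type $\mathcal{G}$, and $\ldeg,\hdeg\le 2$) are symmetric under reflection, so it suffices to treat the up direction. The horizontal directions then need no separate check, since the observation says they are recoverable from the up and down diagrams. If one wants them explicitly: there all events occur at a single height, so the diagram records only the number of vertices, the number of edges and $\beta_0$, and these agree once the up diagrams agree.

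\textbf{Step 1: local degree equality.} I will orient $\mathcal{G}_1$ up and $\mathcal{G}_2$ down, as in \cref{def:collidingpair}, and walk along each alternating cycle $C^i$. Each time the cycle passes through a vertex $v$, it either enters along a $\mathcal{G}_1$ edge and leaves along a $\mathcal{G}_2$ edge, or the reverse. In the first case $v$ is the upper endpoint of both edges; in the second it is the lower endpoint of both. Summing over all passes and all cycles gives
\[
\ldeg_{\mathcal{G}_1}(v)=\ldeg_{\mathcal{G}_2}(v) \quad\text{and}\quad \hdeg_{\mathcal{G}_1}(v)=\hdeg_{\mathcal{G}_2}(v)
\]
for every $v$. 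Consequently the two graphs have the same number of edges, and by Observation~\ref{obs:points} the same birth coordinates. Also, for every $t$, the sublevel graphs $\mathcal{G}_1[\le t]$ and $\mathcal{G}_2[\le t]$ have the same Euler characteristic.

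\textbf{Step 2: matching the up diagrams.} I will use an index filtration that adds each vertex $v$ immediately followed by its (at most two) down-edges. The first down-edge of $v$ always produces the diagonal point $(v,v)$. A second down-edge either produces a finite point $(b,v)$, where $b$ is the birth of the younger of the two merged components (the elder rule), or produces a cycle born at $v$. A vertex with $\ldeg=0$ gives a point $(v,d)$, where $d$ is its death height.

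So, given Step 1, equality of the diagrams reduces to a global statement. For every $t$, the set of \emph{component minima} must agree; that is, the set of lowest vertices of the connected components of $\mathcal{G}_1[\le t]$ must equal the corresponding set for $\mathcal{G}_2[\le t]$. I plan to prove this by induction on $t$.

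The inductive step happens at a vertex $v$ with $\ldeg(v)=2$. Here I must show that the two lower neighbours of $v$ in $\mathcal{G}_1$ lie in distinct components of $\mathcal{G}_1[<v]$ exactly when the two lower neighbours in $\mathcal{G}_2$ lie in distinct components of $\mathcal{G}_2[<v]$. I also need that, when they do, the younger of the two merged minima is the same vertex in both graphs.

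To see this, I will restrict the alternating cycles to $\mathcal{G}_1\sqcup\mathcal{G}_2$ below $v$. Because every vertex has $\ldeg,\hdeg\le 2$, each cycle cut at height $v$ breaks into alternating paths. Each such path is a union of a $\mathcal{G}_1$ path and a $\mathcal{G}_2$ path with the same endpoints, interleaved. A path of this kind certifies that its endpoints are connected in $\mathcal{G}_1[<v]$ if and only if they are connected in $\mathcal{G}_2[<v]$. It also shows that the minimum vertex it reaches is shared by both graphs.

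\textbf{Main obstacle.} This last argument is where I expect the real difficulty. Several cycles may pass through the same vertex (up to twice, given the degree bounds), so connectivity can be routed through different cycles in the two graphs. The degree bound must be exactly what rules out the cross-over behaviour exhibited in \cref{fig:multiple edges}. My first attempt will be a case analysis on how the at most four cycle-passes through $v$ and through its lower neighbours are paired up.

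\textbf{Step 3: the graphs differ.} Finally, $\mathcal{G}_1\neq\mathcal{G}_2$ because their edge sets partition a disjoint union along alternating cycles. The only way the two graphs could coincide is if every alternating cycle were a $2$-cycle, i.e.\ the same edge traversed up in $\mathcal{G}_1$ and back down in $\mathcal{G}_2$. I will exclude this degenerate case explicitly: either it is ruled out by the convention that alternating cycles are non-trivial, or it must be added as a hypothesis.
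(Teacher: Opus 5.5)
\textbf{There is a genuine gap, in Step 2.} Your Step 1 is correct and matches the paper's first step. But the inductive step at a vertex with two down-edges is left open, and the mechanism you sketch for it does not work. The $\mathcal{G}_1$-edges of an alternating path are every other edge of that path, and these do not in general form a path. So such a path certifies no connectivity in either graph.

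More seriously, the claim you are trying to prove in Step 2 is false in the generality you set up: each vertex allowed up to two down-edges in each $\mathcal{G}_i$, with $\ldeg$ and $\hdeg$ agreeing vertexwise. The pair in \cref{fig:multiple edges} is exactly of this kind:
\begin{itemize}
\item at every vertex, $\ldeg_{\mathcal{G}_1}=\ldeg_{\mathcal{G}_2}\le 2$ and $\hdeg_{\mathcal{G}_1}=\hdeg_{\mathcal{G}_2}\le 2$;
\item the sets of component minima agree (both $\{1,2,4\}$) up to height $5$;
\item at the top vertex, $\mathcal{G}_1$ merges the components with minima $1$ and $2$, while $\mathcal{G}_2$ merges those with minima $1$ and $4$. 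This produces $(2,6)$ versus $(4,6)$, as in \cref{fig:VPHT multiple edges}.
\end{itemize}
So no case analysis on cycle passes can close the step as you have framed it.

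\textbf{The missing observation.} The degree hypothesis applies to the type $\mathcal{G}$ graph $G=\mathcal{G}_1\sqcup\mathcal{G}_2$, read with multiplicity as $\ldeg_G(v),\hdeg_G(v)\le 2$. The example above violates it: its bottom vertex has $\hdeg_G=4$ and its top vertex has $\ldeg_G=4$. Your Step 1 splits these edges evenly, so $\ldeg_{\mathcal{G}_i}(v)\le 1$ and $\hdeg_{\mathcal{G}_i}(v)\le 1$ for all $v$ and $i=1,2$.

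Hence each $\mathcal{G}_i$ is a disjoint union of monotone paths. It has no cycles, since the top vertex of a cycle would need two down-edges, and no two components ever merge. In the up direction, each vertex contributes $(v,\infty)$ if $\ldeg_{\mathcal{G}_i}(v)=0$ and $(v,v)$ if $\ldeg_{\mathcal{G}_i}(v)=1$. Step 1 makes these contributions identical for $i=1,2$. The down direction is the same argument with $\hdeg$.

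This is the paper's proof, and it makes your case $\ldeg(v)=2$ vacuous rather than the main obstacle. Your Step 3 point is legitimate and is not addressed in the paper's proof: one must also check $\mathcal{G}_1\neq\mathcal{G}_2$, which fails when every cycle is a $2$-cycle on a common edge.
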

Such a colliding pair can be seen in \cref{fig:firstEx} and an associated verbose diagram in \cref{fig:firstVPHT}.
\begin{proof}
    Fix some arbitrary choice of cycles.
    We focus on the up direction. Edges connecting a vertex to a vertex above will be called up-edges and those connecting a vertex to a vertex below will be called down-edges.
    
    Consider any vertex $v_i$ in $\mathcal{G}$. Then $v_i$ can have degree
    $\deg(v_i)\in\{0,2,4\}$, and $\mathcal{G}_{1}$ and~$\mathcal{G}_{2}$ each
    include half of the up-edges and half of the down-edges. To see this,
    assume $v_i$ has two lower neighbors in $\mathcal{G}$, denoted $v_j$ and $v_k$. Without
    loss of generality, assume $(v_j, v_i) \in \mathcal{G}_1$. So in our choice
    of cycle $C$, $v_j$ is followed by $v_i$. But then we also have that $v_i$
    must be followed by $v_k$, as an edge leading up must be followed by an edge
    leading down in an alternating cycle and thus the edge $(v_i,v_k)\in
    \mathcal{G}_2$.
    
    Next we show that $\mathcal{G}_1$ and $\mathcal{G}_2$ only have non-trivial zero-dimensional homology and thus their higher dimensional diagrams are trivially identical. Since we are considering graphs, the only higher order homology that could appear would be a cycle. 
    Assume there is such a cycle $v_{1},v_2,\ldots,v_{k}$ in $\mathcal{G}_{1}$. Let $v_{1}$ be its highest vertex. 
    Every vertex in the cycle must be connected to two other vertices in $\mathcal{G}_1$. But then $v_{1}$ must be connected to two lower vertices in $\mathcal{G}_{1}$; this is a contradiction to what we found in the previous paragraph. 
    
    So, to obtain a difference in the diagrams, there must be a difference in connected components at the height of some $v_i$. 
    The connected components in $\mathcal{G}_1$ and $\mathcal{G}_2$ are paths
    along vertices of increasing height or equivalently of increasing index,
    this follows from the second paragraph. 
    Then every such connected component is a singleton or has one minimal vertex
    $v_{\min}$, one maximal vertex $v_{\max}$ and some vertices in between. 
    In $\mathcal{G}$, $v_{\min}$ has degree two with two up-edges, hence one
    up-edge in each of $\mathcal{G}_{i}$ and $\mathcal{G}_2$. Similarly, $v_{max}$ has one
    down-edge in each of $\mathcal{G}_{i}$ and $\mathcal{G}_2$, and all other vertices in the
    connected component have one down and one up edge in $\mathcal{G}_1$ and $\mathcal{G}_2$
    Next, consider the degree of $v_i$.
    
    Case 1: deg($v_i$)=0, then $v_i=v_{\min}=v_{\max}$. The vertex $v_i$ is a singleton, thus there is be a connected component born at height $i$ that lives until infinity, corresponding to a point~$(i,\infty)$ in both diagrams.
    
    Case 2: deg($v_i)=2$, then a)~$v_i=v_{min}\neq v_{\max}$ or b)~$v_i=v_{max}\neq v_{\min}$ in both $\mathcal{G}_1$ and~$\mathcal{G}_2$. In case 2a, a new connected component is born at height $i$ and both diagrams have a point at~$(i,\infty)$. Case 2b leads to a point $(i,i)$ on the diagonal, since the vertex created a new connected component that is instantly killed by adding the edge to its preceding vertex. 
    
    Case 3: deg($v_i$)=4, then $v_i$ is a middle vertex in some connected components in $\mathcal{G}_1$ and~$\mathcal{G}_2$, so we again obtain a point at $(i,i)$.

    Hence, for all heights $i$, we get the same points in both diagrams and thus the diagrams of the two graphs coincide, and they are non VPHT-reconstructible.
\end{proof}

\commentHack{\begin{proof}
    See \cref{proof:special-col-non-recon}.
\end{proof}
}

Next, we identify properties of non-reconstructible graphs. In order to prove \cref{thm:VPHTimplCollision}, we first establish the following fact about degrees of vertices in type $\mathcal{G}$ graphs.

\commentHack{
\begin{proof}
    See \cref{proof:odd-degress}.
\end{proof}
}

\begin{restatable}[Type $\mathcal{G}$ Equivalence]{restLemma}{TypeGEquivalence}
    \label{lemma:even-equiv}
    A vertical graph $G=(V,E)$ is of type $\mathcal{G}$ if and only if for all $v \in V$, $\mathrm{ldeg}_G(v)$ and $ \mathrm{hdeg}_G(v)$ are even. 
\end{restatable}

\begin{proof}
If $G$ is of type $\mathcal{G}$ then it can be partitioned into alternating cycles. As an alternating cycle visits every vertex incoming from the same direction as it exits (that is, either from above or below), it is evident that every vertex has an even $\mathrm{ldeg}$ and $\mathrm{hdeg}$.

For the reverse, we prove the contrapositive. 
Assume $G$ is not of type $\mathcal{G}$, and thus cannot be partitioned into alternating cycles. 
If it cannot be partitioned into cycles then by Veblen's theorem~\cite{veblen1912application}, there exists a vertex of odd degree and we are done. 

Otherwise, we assume $G$ can be partitioned into cycles. We proceed to construct a partition into cycles through recursion, for which we can guarantee that there exists a vertex violating the even-ness property. 
By assumption, there exists an initial partition into cycles $P_0=(C^0_1, \dots, C^0_n)$. 
For the recursion step, assume $P_i=(C_1^i,\dots,C_{n-i}^i)$ is a partition of the edges into cycles. 
By assumption, there exists a non-alternating cycle $C$ within it. 
As such, without loss of generality, this cycle visits a vertex $v$ coming from below and exiting above.

If every other cycle in $P_i$ visiting $v$ enters $v$ from the same direction as it exits, we have found a vertex $v$ with odd $\mathrm{hdeg}$ and $\mathrm{ldeg}$, fulfilling the desired property. 

Otherwise, again up to reversing cycle direction, there exists another cycle $C'$ entering~$v$ from above and exiting below. 
In this case, we construct $P_{i+1}$ by joining the two cycles into a single cycle $\widetilde{C}$, 
such that the (into $v$) incoming edge of $C$ is succeeded by the outgoing edge of $C'$ and the incoming edge of $C'$ is succeeded by the outgoing edge of $C$ in the new cycle~$\widetilde{C}$. 
This new cycle visits $v$ twice, both times exiting in the same direction it enters. 

After at most $n-1$ steps of this recursive construction, we find a vertex $v$ with $\mathrm{hdeg}$ and $\mathrm{ldeg}$ being odd in a partition along the way, or are left with only a single cycle, in which case by the assumption that it cannot be alternating (as otherwise the partition would be alternating), it must contain such a vertex.

We have shown if $G$ is not of type $\mathcal{G}$ there exists some $v \in V$ such that $\ldeg_G(v)$ or $\hdeg_G(v)$ is odd. Thus, we have shown the desired result through contraposition.
\end{proof}

This allows us to state the following theorem about non-reconstructibility.

\commentHack{\begin{proof}
    See \cref{proof:type-g-equiv}.
\end{proof}}

\begin{theorem}[Non-Reconstructible implies Colliding Pair]
    \label{thm:VPHTimplCollision}
    If a pair of vertical graphs~$(G_1,G_2)$ is not a colliding pair, then $\mathit{VPHT}(G_1)$ and $\mathit{VPHT}(G_2)$ are distinct.
\end{theorem}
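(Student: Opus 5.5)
We prove the contrapositive: assuming $\mathit{VPHT}(G_1)=\mathit{VPHT}(G_2)$, we show that $(G_1,G_2)$ is a colliding pair. By \cref{lemma:even-equiv}, this amounts to showing that in $G=G_1\sqcup G_2$ every vertex $v$ has even $\ldeg_G(v)=\ldeg_{G_1}(v)+\ldeg_{G_2}(v)$ and even $\hdeg_G(v)=\hdeg_{G_1}(v)+\hdeg_{G_2}(v)$. Since the lemma's proof is purely combinatorial (Veblen plus cycle splicing), it applies verbatim to the multigraph $G$. Once $G$ is known to be of type $\mathcal{G}$, we take the partition into alternating cycles. With edges of $G_1$ oriented up and edges of $G_2$ oriented down, an alternating cycle alternates exactly between $G_1$ and $G_2$ edges, so each cycle splits as a simple colliding pair and $(G_1,G_2)$ is a colliding pair.

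By the earlier observation, the VPHT is determined by the verbose diagrams in the up and down directions, and it also contains them. Hence it suffices to extract from the up diagram the parity of $\ldeg_{G_j}(v_i)$ for each $i$, in a way that depends only on the diagram; symmetrically, the down diagram yields the parity of $\hdeg$. The key step is to read off local degree data at each height from the verbose diagram. In the up direction, at height $i$ the index filtration adds $v_i$ followed by its $\ldeg(v_i)$ down-edges. By \cref{obs:points}, each such edge either kills a component (a point with death $i$) or creates a cycle (an $H_1$ point born at $i$). Therefore
\[
\ldeg_{G_j}(v_i)=\#\{\text{$H_0$ points with death } i\}+\#\{\text{$H_1$ points with birth } i\},
\]
and this quantity is visible in the diagram. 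Equal diagrams thus give $\ldeg_{G_1}(v_i)=\ldeg_{G_2}(v_i)$, so the sum is even. The down direction similarly gives $\hdeg_{G_1}(v_i)=\hdeg_{G_2}(v_i)$.

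This argument actually yields equality rather than mere parity, which is stronger than needed. The main obstacle I expect is justifying that deaths and births at height exactly $i$ are attributable only to edges whose top endpoint is $v_i$. This holds because the vertices have distinct heights $1,\dots,n$ and each edge $(v_k,v_i)$ with $k<i$ has height $i$. We must also make sure that verbose diagrams keep the diagonal points $(i,i)$, so that instantaneous deaths are counted and the count above is well defined independently of the chosen compatible index order. If the direct count is delicate, I would instead argue via rank changes of $H_0$ and $H_1$ across height $i$, which are determined by the diagram.
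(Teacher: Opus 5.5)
\textbf{The gap is in the last step.} Your degree-reading step is the same as the paper's. By \cref{obs:points}, the number of $H_0$ deaths plus $H_1$ births at height $i$ in the up diagram is $\ldeg(v_i)$, and the down diagram gives $\hdeg(v_i)$ in the same way. The problem is the reduction to \cref{lemma:even-equiv}. That lemma is about the \emph{uncolored} multigraph $G_1\sqcup G_2$. Its alternating cycles alternate geometrically: each visit to a vertex enters and leaves on the same side. Each such cycle then induces its own split into up-traversed and down-traversed edges, and nothing forces that split to coincide with the given split into $G_1$ and $G_2$. So your sentence ``an alternating cycle alternates exactly between $G_1$ and $G_2$ edges'' is false, and parity of $\ldeg$ and $\hdeg$ in the union does not imply that $(G_1,G_2)$ is a colliding pair.

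\textbf{Counterexample.} Take $G_1=\{v_1v_3,v_1v_4\}$ and $G_2=\{v_2v_3,v_2v_4\}$. The union is the geometrically alternating $4$-cycle $v_1v_3v_2v_4$, so every $\ldeg$ and $\hdeg$ is even. However, with $G_1$ oriented up, $v_1$ has no incoming edge. Hence no $G_1$-up/$G_2$-down alternating cycle can cover its edges, and $(G_1,G_2)$ is not a colliding pair. The paper's proof uses \cref{lemma:even-equiv} in exactly the same way (``not a colliding pair, hence some vertex with odd $\ldeg$''), so it has the same hole.

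\textbf{The fix uses what you already proved.} Your remark that equality is ``stronger than needed'' is backwards: equality is exactly what is needed, and parity is not enough. Suppose $\ldeg_{G_1}(v)=\ldeg_{G_2}(v)$ and $\hdeg_{G_1}(v)=\hdeg_{G_2}(v)$ for every $v$. Orient $G_1$ up and $G_2$ down. At each vertex $v$, fix two bijections:
\begin{itemize}
\item one between the $G_1$-edges arriving from below and the $G_2$-edges leaving downward;
\item one between the $G_2$-edges arriving from above and the $G_1$-edges leaving upward.
\end{itemize}
These successor rules define a permutation of the edge multiset. Its orbits are closed trails that alternate between $G_1$-edges traversed up and $G_2$-edges traversed down, and together they partition $G_1\sqcup G_2$. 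Closed trails are the same notion of ``cycle'' that the lemma's splicing construction already produces. This shows directly that $(G_1,G_2)$ is a colliding pair and completes your contrapositive. With this replacement, \cref{lemma:even-equiv} is not needed at all.
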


\commentHack{\begin{proof}
    See \cref{proof:non-recon-impl-coll}.
\end{proof}}

\begin{proof}
We consider the verbose diagrams corresponding to the up direction.
If $G_1$ and $G_2$ are not a colliding pair,
    then, by \cref{lemma:even-equiv}, there exists some vertex~$v_i$ that
    (without loss of generality) $\ldeg(v_i)$ is odd and $\ldeg_{G_1}(v_i) >
    \ldeg_{G_2}(v_i)$.

    By \cref{obs:points}, edges incident to and below $v_i$ in $G_1$ are in bijection with points in the verbose diagram marking zero-dimensional deaths at height $i$ and one-dimensional births at height $i$. Then, because the number of such edges in $G_1$ and $G_2$ differ, the number of such points in the diagrams for $G_1$ and $G_2$ must also differ. Thus, $\mathit{VPHT}(G_1) \neq \mathit{VPHT}(G_2)$.
\end{proof}
Finally, given a colliding pair, we can extend it so that it remains (non-) reconstructible.
\begin{restatable}{restLemma}{finalLemmaOfSubm}
    Let $(\mathcal{G}_1,\mathcal{G}_2)$ be a colliding pair. Let $\mathcal{C}=\{C^1,\dots,C^k\}$ be a partition of $G$ into alternating cycles. Then adding another copy of a cycle $C\in \mathcal{C}$ to $\mathcal{G}$ does not change the non VPHT-reconstructibility of the resulting colliding pair.
    \label{lem:anothercycle}
\end{restatable}

\begin{proof}
We investigate how adding a cycle copy impacts the verbose diagrams with respect to the up direction.
Adding a copy of an existing cycle cannot connect previously disconnected components, meaning there is no change to the zero-dimensional homology. 
    We obtain a new cycle for every new double edge, born at the height of the
    higher vertex of the edge. But since we added an alternating cycle, a vertex
    $v$ is the higher vertex for an edge in $\mathcal{G}_1\sqcup C$ if and only if it is
    the higher vertex for an edge in $\mathcal{G}_2\sqcup C$, 
    since that means we have a sequence~$v_i,v,v_j$ in the cycle, with $v_i,v_j<v$, thus the edge $(v_i,v)$ must be in $\mathcal{G}_1\sqcup C$ and $(v,v_j)$ in $\mathcal{G}_2\sqcup C$.
    As this argument applies to both copies of the added cycle, double edges with upper vertex~$v$ appear in both of the new graphs in the new pair (obtained after partitioning the new $G$ with added cycle $C$), thus the same one-cycles are born at the same heights, and thus the changes to both verbose diagrams are identical. An identical argument holds for the down direction. Thus, if the VPHTs were identical (or distinct), they remain identical (or distinct), maintaining the (non-) reconstructibility of the resulting graphs.
\end{proof}

\commentHack{\begin{proof}
    See \cref{proof:adding-copies}.
\end{proof}}

\section{Computational Approach}\label{overview}
To complement the theoretical exploration of vertical graphs and alternating cycles, we developed a computational tool with the purpose of more easily investigating the relationship between topological features of certain graphs and their VPHT. This approach was motivated by the hypothesis that an exhaustive search on small graphs might reveal structural patterns that obstruct unique reconstructibility, as well the question whether all non VPHT-reconstructible pairs exhibit the previously detailed cycle pattern. 

\paragraph{Computational Limits}
In addition to restricting ourselves to finite simple graphs, we would like to highlight some limitations of our computations. For one, any brute force search of the kind we perform very quickly runs into a major scaling problem. The number of possible graphs that need to be considered on any given $n$ vertices is $2^{\left(\frac{n(n-1)}{2}\right)}$. For us, computations on six vertices are near instant, on seven it takes a few seconds, for eight we could be looking at half an hour and anything beyond that is out of reach. These scaling laws apply no matter how powerful the hardware.

Our particular implementation has a hard limit of allowing at most 32 vertices in search algorithms, however this is practically irrelevant as the number of graphs to search there is on the order of $10^{149}$.

\subsection{Application Feature Overview}\label{app-overview}
We developed a graphical user interface (GUI) based app facilitating responsive exploration. It is accessible as a static page progressive web app or as a native app compiled from source. In both cases it runs entirely locally. Both are publicly accessible:

\begin{itemize}
    \item Progressive web app: \href{https://kalokak.github.io/vpht-algo/}{https://kalokak.github.io/vpht-algo/}
    \item Public source code: \href{https://github.com/KalokaK/vpht-algo}{https://github.com/KalokaK/vpht-algo}
\end{itemize}

The application allows users to draw a graph and visualize its persistence diagrams for zeroth and first dimensional homology in a given selected \emph{sweep direction}. The application can then perform a brute force search for sets of graphs with the same VPHT. Note that they compute diagrams only in the selected sweep direction and its reverse for a given graph $G$; this approach is equivalent to considering the VPHT of the graph where all vertices of $G$ are projected on a line, and thus aligns with our theoretical approach. There are various different settings for the brute force search and options on how to display the output, discussed in \cref{subsec:features}. The algorithms used in the application are listed in \cref{subsec:algos}. With the help of this tool, we established the results given in the following section.

\subsection{Numerical Results}
Results were obtained using the "Compute Collision Sets" feature, with the corresponding number of vertically stacked vertices drawn, "Ignore Dangling Vertices" disabled and no edges drawn.

\begin{lemma}[Partitionability up to 7 Vertices]
\label{lem:part7}
    All pairs of vertical graphs with up to seven vertices and that have the same VPHT are colliding pairs.
\end{lemma}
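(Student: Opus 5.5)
The statement is in fact a special case of \cref{thm:VPHTimplCollision}, which holds for vertical graphs on any number of vertices. My plan is to derive the lemma from that theorem and to use the exhaustive search of the application only as an independent check.

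\textbf{Deduction from the theorem.} Let $G_1,G_2$ be vertical graphs on at most seven vertices with $\mathit{VPHT}(G_1)=\mathit{VPHT}(G_2)$.
\begin{enumerate}
\item By \cref{obs:points}, the birth coordinates of the zero-dimensional points in the up-direction diagram are exactly the vertex heights. So $G_1$ and $G_2$ have the same vertex set, and $G_1\sqcup G_2$ is a well-defined vertical multigraph.
\item Suppose, for contradiction, that $(G_1,G_2)$ is not a colliding pair. By \cref{lemma:even-equiv}, some vertex $v_i$ has odd $\ldeg_{G_1\sqcup G_2}(v_i)$ or odd $\hdeg_{G_1\sqcup G_2}(v_i)$.
\item In the first case, $\ldeg_{G_1}(v_i)+\ldeg_{G_2}(v_i)$ is odd, so $\ldeg_{G_1}(v_i)\neq\ldeg_{G_2}(v_i)$.
\item By \cref{obs:points}, these down-edges are in bijection with the points of the up-direction diagram having a death or a one-dimensional birth at height $i$. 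Hence the two up-direction diagrams differ.
\item In the second case, the symmetric argument with the down direction and $\hdeg$ gives different down-direction diagrams.
\end{enumerate}
Either way this contradicts $\mathit{VPHT}(G_1)=\mathit{VPHT}(G_2)$, so $(G_1,G_2)$ is a colliding pair. This is exactly the argument of \cref{thm:VPHTimplCollision}; I would restate it here in this self-contained form.

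\textbf{Computational confirmation.} For each $n\le 7$, the procedure would be:
\begin{enumerate}
\item Enumerate all $2^{\binom{n}{2}}\le 2^{21}$ simple graphs on $n$ vertically stacked vertices, keeping isolated vertices (``Ignore Dangling Vertices'' disabled).
\item By the observation that the up and down directions determine the VPHT of a vertical graph, compute only the verbose diagrams in these two directions. Encode them canonically as sorted multisets of (dimension, birth, death) triples.
\item Bucket the graphs by this key; each bucket is one collision set.
\item For every pair in a bucket, check that every vertex has even $\ldeg$ and $\hdeg$ in the disjoint union. By \cref{lemma:even-equiv}, this is equivalent to the pair being colliding.
\end{enumerate}
For $n=7$ there are about two million graphs, which matches the reported running time of a few seconds.

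\textbf{Main obstacle.} The only delicate point is making sure the computed diagrams are genuinely verbose and canonical, so that the bucketing really detects equality of VPHTs. In particular, on-diagonal points $(i,i)$ must be recorded with their multiplicity, and within a single height the compatible index order must be fixed consistently (vertex before its down-edges). Since the theorem already guarantees the conclusion, any pair the search flagged as non-colliding would point to an implementation error, which makes this a useful sanity test of the tool.
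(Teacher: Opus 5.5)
\textbf{The gap is your step 2.} \cref{lemma:even-equiv} is about the single multigraph $G_1\sqcup G_2$. Even $\ldeg$ and $\hdeg$ at every vertex guarantee \emph{some} partition of the union into alternating cycles, but that partition comes with its \emph{own} split into up-traversed and down-traversed edges. Being a colliding pair is a statement about the \emph{given} split. You need a partition in which every cycle alternates between $G_1$-edges traversed upward and $G_2$-edges traversed downward. In particular, every local maximum of a cycle then uses one down-edge from each graph.

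Parity of the union does not give this. Take $G_1=\{v_1v_3,\,v_2v_3\}$ and $G_2=\{v_1v_4,\,v_2v_4\}$. The union is the alternating $4$-cycle $v_1v_3v_2v_4$, so every $\ldeg$ and $\hdeg$ in $G_1\sqcup G_2$ is even. Yet $(G_1,G_2)$ is not a colliding pair, since both down-edges at $v_3$ lie in $G_1$. So \emph{not colliding} does not produce an odd-degree vertex, and your contradiction never starts.

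Your computational step 4 has the same defect. Union parity is strictly weaker than the lemma's conclusion, so passing that test verifies nothing. The paper's proof of \cref{thm:VPHTimplCollision} makes the identical inference, so appealing to that theorem does not supply the missing step. The paper proves this lemma differently: it runs an explicit search, on every collision set with at most seven vertices, for alternating-cycle partitions with $G_1$-edges fixed as up-edges and $G_2$-edges as down-edges, setting common edges aside as $2$-cycles.

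\textbf{How to repair the deductive route.} Once repaired, it proves the statement for every number of vertices. By \cref{obs:points}, in the up-direction diagram the number of finite zero-dimensional deaths plus one-dimensional births at height $i$ equals the number of edges whose upper endpoint is $v_i$. So equal VPHTs give $\ldeg_{G_1}(v)=\ldeg_{G_2}(v)$ for all $v$, and the down direction gives $\hdeg_{G_1}(v)=\hdeg_{G_2}(v)$. In place of \cref{lemma:even-equiv} you need its colored version: these equalities imply that $(G_1,G_2)$ is colliding.

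To prove it, orient $G_1$ up and $G_2$ down. At each vertex $v$, fix a bijection from the $G_1$-edges joining $v$ to lower vertices onto the $G_2$-edges joining $v$ to lower vertices. Fix another from the $G_2$-edges joining $v$ to higher vertices onto the $G_1$-edges joining $v$ to higher vertices. This gives every edge a unique successor at its head and a unique predecessor at its tail, i.e.\ a permutation of $E(G_1)\sqcup E(G_2)$. Its orbits are closed trails alternating between $G_1$-edges going up and $G_2$-edges going down; a common edge may close up as a $2$-cycle. These are alternating cycles in the paper's sense, which allows repeated vertices as in the merging step of \cref{lemma:even-equiv}, so they exhibit $(G_1,G_2)$ as a colliding pair.

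With this substitution your argument is a valid and more general alternative to the paper's exhaustive computation, which then serves only as a consistency check.
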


\begin{remark}
\label{rem:about-part7}
    \Cref{lem:part7} was proven computationally with "Common Edges Excluded from Cycle Search" (see \cref{subsec:features}) as edges common to both always from a cycle, and thus their inclusion cannot break partitionability.
\end{remark}

\begin{corollary}[Independence up to 7 Vertices]
    \label{cor:independent7}
    All pairs of vertical graphs with up to seven vertices that have the same VPHT form colliding pairs, and there exists an alternating cycle partition such that all common edges form cycles of length two. 
\end{corollary}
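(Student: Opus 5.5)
We deduce this corollary from \cref{lem:part7} by rearranging a given alternating cycle partition so that common edges are split off as 2-cycles.

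Let $G_1,G_2$ be vertical graphs on at most seven vertices with $\mathit{VPHT}(G_1)=\mathit{VPHT}(G_2)$. By \cref{lem:part7}, $(G_1,G_2)$ is a colliding pair. Hence $G=G_1\sqcup G_2$ admits a partition into alternating cycles, with the edges of $G_1$ oriented up and those of $G_2$ oriented down.

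Let $E_c=E(G_1)\cap E(G_2)$ be the set of common edges. Each $e\in E_c$ appears in $G$ as a double edge: one copy from $G_1$ (oriented up) and one from $G_2$ (oriented down). Traversing the up copy and returning along the down copy is a cycle of length two. It is alternating, since at each endpoint it enters and exits from the same side (both edges lie above the lower endpoint and below the upper one). So let $D$ be the collection of these 2-cycles, one for each $e\in E_c$.

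Define $G'_1=G_1\setminus E_c$ and $G'_2=G_2\setminus E_c$. It remains to show that $G'_1\sqcup G'_2$ partitions into alternating cycles each alternating between $G'_1$ and $G'_2$ edges. The cleanest route is a local parity argument, the same one underlying \cref{lemma:even-equiv}, but tracking membership in $G_1$ versus $G_2$. Since $G$ is partitioned into alternating cycles with $G_1$ up and $G_2$ down, at every vertex $v$ the cycles pair each lower $G_1$-edge with a lower $G_2$-edge. Thus $\ldeg_{G_1}(v)=\ldeg_{G_2}(v)$, and likewise $\hdeg_{G_1}(v)=\hdeg_{G_2}(v)$. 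Removing a common edge lowers both sides of the relevant equation by one at each endpoint, so these equalities persist in $G'_1,G'_2$.

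Conversely, whenever these degree equalities hold we can build the cycles greedily. Start on any edge and walk: arriving at $v$ from below along a $G'_1$-edge, leave along an unused lower $G'_2$-edge. Such an edge exists because of the equal counts, exactly as in the construction of an Eulerian circuit. Each walk closes into an alternating cycle, and repeating exhausts all edges. Together with $D$, this gives the required partition.

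The main obstacle is making this greedy walk rigorous. Concretely, one must show that a maximal walk returns to its start in a consistent state, which holds by the same counting as in Veblen-type arguments. Alternatively, we can avoid the issue altogether by noting that, as in \cref{rem:about-part7}, the computation for \cref{lem:part7} was already run with common edges excluded from the cycle search. The partition it certifies for $G'_1\sqcup G'_2$ then directly yields the claim once the 2-cycles in $D$ are appended.
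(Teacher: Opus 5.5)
Your proof is correct. Your closing fallback is exactly the paper's proof: the paper only cites \cref{rem:about-part7}, i.e.\ the fact that the search behind \cref{lem:part7} was run with common edges excluded, so it already certifies an alternating partition of $G_1'\sqcup G_2'$, and the common edges are then appended as 2-cycles.

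Your main route is different and structural. A partition of $G_1\sqcup G_2$ into directed alternating cycles forces $\ldeg_{G_1}(v)=\ldeg_{G_2}(v)$ and $\hdeg_{G_1}(v)=\hdeg_{G_2}(v)$ at every $v$. Deleting common edges preserves these equalities, and a Hierholzer-type walk then decomposes $G_1'\sqcup G_2'$ into alternating closed trails.

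The obstacle you flag resolves as you indicate. Only the start vertex carries an unmatched edge, namely the initial up-edge. So the walk can get stuck only when it returns there from above along a $G_2'$-edge, and stopping there closes it alternatingly. Removing a closed alternating trail preserves the degree equalities, so you can iterate until all edges are used. The trails may revisit vertices; this matches the paper's use of \emph{cycle}, cf.\ the merged cycle in the proof of \cref{lemma:even-equiv}.

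The paper's route is immediate but limited to $n\le 7$ and relies on trusting the implementation. Yours is a genuine proof and does not depend on $n$. In fact you do not need \cref{lem:part7} at all. By \cref{obs:points}, the up (resp.\ down) verbose diagram records $\ldeg(v_i)$ (resp.\ $\hdeg(v_i)$) as the number of finite $0$-dimensional deaths plus $1$-dimensional births at the height of $v_i$. So equal VPHTs already give the degree equalities, and your argument then proves the arbitrary-$n$ analogue that the paper only conjectures.
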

\begin{proof}
    By \cref{rem:about-part7} and \cref{lem:part7}, we conclude that the disjoint union graph for all such pairs with up to seven vertices has the desired property.
\end{proof}
\begin{remark}
    This implies, removing all common edges, we still obtain a decomposition into alternating cycles.
We conjecture that an analogue to \cref{cor:independent7} holds for an arbitrary number of vertices.
\end{remark}

\section{Discussion and Conclusion}
\label{sec:discussion}
We presented a partial classification of non VPHT-reconstructible vertical graphs (\cref{fig:venndiagramm}).
We are still interested in a complete classification in terms of easily checkable conditions.

In our work, we only consider vertical graphs.
For general graphs, one can reconstruct any edge whose supporting line contains no vertices other than its two endpoints \cite{belton_reconstructing_2020}.
We conjecture that if a particular edge of a graph $G$ is not reconstructible, then the set of edges of $G$ with the same supporting line themselves form a (rotated) vertical graph that is non VPHT-reconstructible.
If true, this conjecture implies that any non VPHT-reconstructible graph decomposes into reconstructible edges, and non VPHT-reconstructible vertical graphs.

While theoretical work on reconstructibility often involves carefully chosen
directions, current applications of directional transforms typically use random or
evenly-spaced directions, the effect of which was
explored in \cite{fasy2025fewsamplingboundstopological}. Our results give insight to directions that are potentially unhelpful to choose in such settings. Namely,
if projecting a graph onto some one-dimensional subspace produces one of
our non VPHT-reconstructible vertical graphs, we know a direction orthogonal to
such a subspace is not able to uniquely reconstruct the edges. For instance, this is exactly
the case for the graphs in \cite[Figure 8]{ordering}, where the authors identify
all directions in which the graphs project to a version of
Figure~\ref{fig:firstEx}. 

\bibliography{eurogc_citations.bib}

\newpage
\appendix

\markboth{Appendix}{Appendix}
\crefalias{section}{appendix} 
\crefalias{subsection}{appendix}
\crefname{appendix}{Appendix}{Appendices}
\Crefname{appendix}{Appendix}{Appendices}

\section{Computational Approach}\label{appendix:computation}
\subsection{Additional features of the GUI}\label{subsec:features}

There are two main types of brute force searches: 
\begin{enumerate}
    \item Find all possible graphs on the same set of vertices as the drawn graph $G$, compute their VPHT, and record all graphs $G'$ with the same VPHT as $G$. They are depicted as \emph{colliding pairs} with $G$. This feature is called `{Compute Colliding Graphs}.'
    \item Find the set of all possible graphs $\Omega$ on the same set of vertices as the drawn graph, which include the (possibly empty) set of edges of the drawn graph. Within $\Omega$ search for sets of graphs which all share the same VPHT, so-called \emph{collision sets}\footnote{Notice how the "Compute Colliding Graphs" brute force search is nothing but a search for the collision set which contains the drawn graph.}.
    The application then allows users to filter and sort these collision sets by number of connected components in the graphs\footnote{\label{foot:constant}As a result of having the same VPHT both of these are constant within a collision set.}, number of cycles \footref{foot:constant}, number of off diagonal points in the diagram, longest cycle found within a set\footnote{How this is defined and works is document in the UI, and further ellaborated on in \cref{finding-minimal-partitions}.} and if the set contains a colliding pair which does not partition into cycles. This feature is called `{Compute Colliding Sets}.'
\end{enumerate}
For both of these brute force search methods, we implemented a cycle search available through the `{Show Cycles}' button. This will attempt to partition the graphs in a collision set pairwise into cycles, if no partition is found, non exists.

Finally, there are two main configuration options for the brute force search:
\begin{enumerate}
    \item `{Ignore Dangling Vertices':} Determines if graphs with isolated vertices should be ignored for the search. Note that this may include the original graph for "Compute Colliding Graphs" in which case no results will be returned.
    \item `{Exclude Common Edges from Cycle Search':} Determines  whether to disregard edges common to both graphs for the cycle search in colliding pairs. 
\end{enumerate}
The tool features various tooltips accessible by {hovering over the various features,} and a more in-depth explanation of the search and sort functionality through a "help" button.

\subsection{Algorithms}\label{subsec:algos}
Here, we briefly describe the central algorithms used for the computational tool, which are relevant to the numerical results we obtained. For performance reasons, the actual implementation is highly parallel and optimizes certain parts of the procedures, but remains computationally equivalent to what we outline here. 

\subsubsection{Computing the VPHT}\label{computing-persistence}
To obtain the diagrams, we follow much of the same steps computationally as we do mathematically in \cref{prelim}. For a given sweep direction (and its reverse) we start by computing a filtration. This is done using a standard sweepline algorithm approach, as described in \cref{algo:sweepline}.
\begin{algorithm}
\caption{Sweep Line Filtration}\label{algo:sweepline}
\begin{algorithmic}[1]
\Require{Graph $G=(V,E)$, Sweep Direction $\vec{d}$}
\Ensure{Vertices $V_{ord}$ sorted by height, Edges $E_{ord}$ sorted by birth height.}
\Function{BuildFiltration}{$G, \vec{d}$}
    \State $V_{ord} \gets$ List of $(v, \text{height}(v, \vec{d}))$ for all $v \in V$
    \State Sort $V_{ord}$ ascending by height
    \State $E_{ord} \gets \emptyset$

    \For{$i \gets |V_{ord}| -1$ \textbf{down to} $1$}
    \Comment{Iterate from highest to lowest}
        \State $(u, h_u) \gets V_{ord}[i]$
        \Comment{Check against all strictly lower vertices}
        \For{$j \gets 0$ \textbf{to} $i-1$}
            \State $(v, \_) \gets V_{ord}[j]$
            \If{edge $(u,v)$ exists in $G$}
                \State $e_{new} \gets (i,j)$ \Comment{Edges store indices to $V_{ord}$}
                \State Append $(e_{new}, h_u)$ to $E_{ord}$ \Comment{Birth height is height of upper vertex}
            \EndIf
        \EndFor
    \EndFor

    \State Reverse $E_{ord}$ \Comment{Resulting order: increasing brith height}
    \State \Return $(V_{ord}, E_{ord})$
\EndFunction
\end{algorithmic}
\end{algorithm}
We then use the matrix reduction algorithm to find creator destructor pairs, from which we compute the verbose persistence diagrams. To reduce the complexity of comparing persistence diagrams further along in the process, we store the points sorted lexicographically first by birth then by death height. 
As we assume collinear vertices (in practice graphs projected onto the sweep direction),
the diagrams for the sweep direction and its reverse contain the same information as the entire VPHT.

\subsubsection{Computing Colliding
Graphs}\label{computing-colliding-graphs}
Given that we can compute the VPHT for the subset of graphs we treat in our work checking for colliding pairs on the set of vertices given by the graph is a simple task. We enumerate all possible edge combinations\footnote{\label{foot:respect}Respecting the option for "Ignore Dangling Vertices".} and compare the VPHT of the resulting graph to the VPHT of the drawn graph. We collect all graphs that form a colliding pair with the drawn graph, including the graph itself, and display the result to the user. 
The user has the option to search for cycles for colliding pairs in the result.

\subsubsection{Computing Collision Sets}\label{computing-collision-sets}
Finding all possible graphs $\Omega$ with the same vertex set as the drawn graph $G$ that contain $G$ as a subgraph
 remains computationally simple. However, the partition of those graphs into colliding sets is more involved. Ultimately we are seeking to obtain the equivalence classes of the relation "identical VPHT" in $\Omega$ and any algorithm achieving this will lead to an equivalent implementation. Nonetheless, we briefly present our approach here.

Our approach is based on the idea that one can hash data and use a truncation of the hash to index into an array, to obtain $\mathcal{O}(\lvert \Omega \rvert)$ quotienting, at the cost of memory. 
We derive a hash for the VPHT based on the FNV1A
hash algorithm~\cite{fowler_fnv_nodate}.
We construct an array of "lists-of-collision-sets" with a power of $2$ size which is $\mathcal{O}(\lvert \Omega \rvert)$ and for each graph index into it using the appropriate number of bits from the hash. 
The resulting list-of-collision-sets is then searched for a set which matches the graph's VPHT, if none are found a new one is added to the list. The graph is inserted into the appropriate set. 
At the end, we collect all non-singleton collision sets.

\subsubsection{Searching for Alternating Cycles and Finding Minimal
Partitions}\label{searching-for-up-down-cycles-and-finding-minimal-partitions}
We have mentioned cycle search numerous times throughout the previous sections. In effect, we are interested in determining if a pair of graphs that have the same VPHT are a simple colliding pair (Definition \ref{def:collidingpair}). Here we outline an algorithmic approach to this.

Given two graphs $G_1$ and $G_2$ on the same vertex set, we construct their disjoint union graph and try to split it into alternating cycles. We label all edges in $G_1$ red up-edges and all edges in $G_2$ blue down-edges. 
The algorithm we used to find partitions of the disjoint union graph into alternating cycles consists of two layers of Depth First Search (DFS); the first layer to find cycles and the second layer to find possible partitions into cycles.

  \begin{enumerate}

  \item In the first layer, using DFS, we find all alternating
    cycles, which start at a red up-edge with lower vertex $v$ and terminate with a blue down-edge with lower vertex $v$.
  \item In the second, in order to find partitions we first pick an arbitrary up-edge $e_0$. Find all cycles $C_0$ starting at $e_0$ using DFS. 
  For each such cycle $c_0 \in C_0$ consider the remaining edges not appearing in $c_0$ from $G_1, G_2$. Out of these, pick an arbitrary red up-edge $e_1$ and again find all cycles $C_1$ within the remaining edges not used in any previous cycles starting at $e_1$. Recurse, until no red up-edges are left remaining.
    \begin{enumerate}
    \item
      At this point, if all blue down-edges have also been used, we have found a partition into cycles, and we append it to a list of possible partitions.
    \item
      If blue down-edges are left remaining, we backtrack, discarding the last cycle $c_k$ and continue recursion by replacing it with the next cycle from $C_k$.
      If no cycles are left in~$C_k$ we again backtrack, discarding $c_{k-1}$ and consider the next cycle from $C_{k-1}$.
    \end{enumerate}
  \item
    We have found all possible partitions.
  \end{enumerate}

\paragraph{\textbf{Finding Minimal
Partitions.}}\label{finding-minimal-partitions}

If two cycles share a vertex, we can combine them into a single, longer cycle. However, this longer cycle is of little interest to us, as it is really just two shorter cycles glued together. If we are for example interested in finding graphs that feature length eight cycles in their partition, we are not interested in length eight cycles which can be broken down into two length four cycles, rather, we want to find cycles of length eight that cannot be split apart. 

Thus, we call a partition minimal if none of its cycles can be further subdivided into shorter cycles. In our algorithm, we expand this further: If we have two partitions into cycles
$P_1$ and $P_2$, we can consider the lengths of cycles in $P_1$ and $P_2$ in descending order and represent $P_1$ as $(l_{1}, l_{2},\dots,l_{k_{1}}, 0, 0, \dots)$ with $l_{i}\geq l_{i+1}$ and  $P_2$ as $(m_{1}, m_{2},\dots,m_{k_{2}}, 0, 0, \dots)$ with $m_i \geq m_{i+1}$. We then compare these in lexicographical order, which we denote $\triangleleft$. We pick a partition into cycles that is minimal with respect to this ordering to represent the colliding pair. Further work could explore all possible partitions into
cycles and look for interesting results there; this was not our focus.

Lastly, we would like to note that we perform this "minimum finding" in-place during the search and not as a secondary step, as it is vastly more memory efficient. An algorithmic description is depicted in \cref{ppair}.

\begin{algorithm}
\caption{Partition Pair}\label{ppair}
\begin{algorithmic}[1]
\Require{Graphs \((V, E_a), (V,E_b)\) on the same vertices.}
\Ensure{Lexicographically minimal partition $P_{min}$}
\State \(p_{min} \gets \left(\infty\right)\) \Comment{Current minimal partition.}
\State \(p \gets \emptyset\) \Comment{Current partition.}
\State \(U_a \gets \emptyset\) \Comment{Used edges.}
\State \(U_b \gets \emptyset\) \Comment{Used edges.}

\Function{Search}{$p, U_a, U_b$}
    \If{$U_a = E_a \wedge U_b = E_b$}
        \If{$p \triangleleft  p_{min}$} \Comment{"$\triangleleft$" denotes lexicographic ordering as above.}
            \State $p_{min} \gets p$
        \EndIf
        \State \Return
    \EndIf

    \State Pick $e \in E_a \setminus U_a$
    \State $C \gets \Call{CycleDFS}{e, U_a, U_b}$
    \For{$c \in C$}
        \State Add $c$ edges to $U_a, U_b$
        \State Call \Call{Search}{$p \cup \{c\}, U_a, U_b$}
        \State Remove $c$ edges from $U_a, U_b$
    \EndFor
\EndFunction
\Function{CycleDFS}{$e, U_a, U_b$}
\State Perform DFS recursively, alternating between unused edges from $E_a$ and $E_b$ starting with $e$. 
\State \Return List of found cycles.
\EndFunction
\end{algorithmic}
\end{algorithm}

\end{document}